\newtheorem{theorem}{Theorem}
\newtheorem{lemma}{Lemma}
\title{Maximising the Influence of Temporary Participants in Opinion Formation}
\author{Zhiqiang Zhuang\thanks{College of Intelligence and Computing, Tianjin University, China} \and Kewen Wang\thanks{School of Information and Communication Technology, Griffith University, Australia} \and Zhe Wang\footnotemark[2] \and Junhu Wang\footnotemark[2] \and Yinong Yang\thanks{Department of Mathematics, Liaoning University, China}}
\date{}
\begin{document}

\maketitle

\begin{abstract}
DeGroot-style opinion formation 
presume a continuous interaction among agents of a social network.
Hence, it cannot handle agents external to the social network 
that interact only temporarily with the permanent ones.
Many real-world organisations and individuals fall into such a category.
For instance, a company tries to persuade as many as possible to buy its products
and, due to various constraints, can only exert its influence for a limited
amount of time.
We propose a variant of the DeGroot model that allows an external 
agent to interact with the permanent ones for a preset period of time.
We obtain several insights on maximising an external agent's influence
in opinion formation by analysing and simulating the variant.

\end{abstract}

\section{Introduction}

Consider a social network in which people have an opinion about the state of something 
in the world, such as the willingness to buy a product, 
the effectiveness of a public policy, or the reliability of an economic forecast.
Rather than forming opinions on their own, 
people tend to learn about the state of the world via observation and communication with others. Mathematical models of \textit{opinion formation} try to formalize these interactions 
by describing how people process the other's opinions 
and how their opinions evolve as a result of the interactions \cite{book/Jackson08}.

The \textit{DeGroot model} \cite{DeGroot} is a benchmark opinion formation model that 
has found usage in many disciplines.
The model describes a discrete time opinion formation process in which
agents within a social network have an initial opinion 
that they update by repeatedly taking weighted average of their friends' opinions.
Over the years, some highly influential variants of the DeGroot model have been proposed 
to take into account real-world situations that were neglected in the original model
such as the \textit{Friedkin and Johnson model} \cite{journal/JMS/Friedkin1990} and
the \textit{bounded confidence model} \cite{journal/Deffuant2000,journal/JASSS/Hegselmann2002}.
A recurrent topic in DeGroot-style opinion formation
is the identification of conditions for reaching a consensus 
and the quantification of individual influence in forming the consensus  \cite{journal/ARC/PROSKURNIKOV2017}.

Implicit in the DeGroot model and its variants is the assumption that 
the agents of a social network interact continuously 
where no agent skips any interaction at any time.
This is a reasonable assumption, given the dynamic nature of opinion formation 
and the research focus on its limiting behaviour.
However, it excludes external agents that do not have a permanent presence 
in the social network but may have a considerable influence to the permanent agents.
A prominent example is an organisation
trying to persuade people to for instance buy its products or
vote for a particular candidate through advertising.
Due to constraints like budget and timing, the organisation 
can advertise or exert their influence only for a limited amount of time,
nevertheless, for some people, the organisation's influence is at least 
comparable to that of their friends in the social network.

Traditionally, agents who refused to be influenced by others
are termed as \textit{stubborn} agents (also called \textit{zealots} \cite{journal/NJP/Masuda_2015} or \textit{radicals} \cite{journal/NHM/Hegselmann2015}) \cite{journal/ARC/PROSKURNIKOV2017}. 
Assuming a permanent presence,
eventually they make every non-stubborn agents submissive of their opinions. 
The aforementioned external agents are also stubborn for having uncompromising opinions, 
however, a sharp difference with the orthodox modeling is their temporary nature.   
To the best of our knowledge we are the first to consider stubborn agents 
without a permanent presence. This setting not only is more realistic, but also 
opens up new perspectives to investigate behaviors of such agents.
For instance, with a limited time frame to exert influence, 
these agents are confronted with the strategic problem
of how to allocate their resource to achieve maximum influence.

Our goal in this paper is twofold.
Firstly, we propose a variant of the DeGroot model that 
allows an external (stubborn) agent to participate only temporarily.
The variant enables the investigation on
how combinations of the following four factors affect the external agent's influence.
We illustrate them in the context of 
a company promoting its products by TV commercials.

\begin{description}
\item[Coverage:] the number of agents to which
the external agent can exert its influence. 
This is the expected number of viewers of the TV commercial each time it is broadcast. 
\item[Duration:] the number of times
the external agent can exert its influence.
This is the expected number of times the TV commercial is broadcast.
\item[Intensity:] the amount of influence 
the external agent can exert to other agents each time it does so.
This reflects the TV commercial's impact to its viewers each time they see it.
\item[Timing:] the time points at which the external agent
exerts its influence.
This reflects the time points at which the TV commercial is broadcast.
\end{description}

\noindent
Secondly, we articulate insights on how to allocate the external agents' resources to the four factors in order to maximise its influence 
through mathematical analysis and computer simulation.
According to our analysis and simulations, 
the timing factor is irrelevant if the coverage factor is at its maximum (i.e., full coverage); the coverage and the duration factor are equally important; and 
it is more effective to allocate resource
to scale up the intensity factor than to scale up the duration factor.
We also derive several other insights that deepen our understanding of opinion formation in general.

After giving some preliminaries, 
we present our model that incorporates the aforementioned factors 
into the opinion formation process.
This is followed by insights obtained by analytical method and subsequently simulations.  

\section{Preliminaries}

In this paper, we write matrices as uppercase letters in boldface
such as $\mathbf{T}$,
vectors as lowercase letters in boldface
such as $\mathbf{p}$, and
scalars as lowercase letters such as $m$.
We denote the set of real numbers and integers as $\mathbb{R}$ and $\mathbb{Z}$ respectively.
The entry in the $i$-th row and $j$-th column
of a matrix $\mathbf{T}$ is denoted as $t_{ij}$.
The \emph{transpose} of a matrix $\mathbf{T}$ is written as $\mathbf{T}^\top$. 
A matrix is \emph{non-negative} if all its entries are non-negative.
A non-negative matrix $\mathbf{T}$ is \emph{stochastic} 
if all its rows sum to 1, that is
$\sum_j t_{ij}=1$ for all $i$.
Vectors are considered as single column matrices
unless otherwise specified. 
The $i$th component of a vector $\mathbf{p}$ is denoted 
as $p_i$. 
The ``zero'' vector $(0,\ldots,0)^\top$ and the ``one'' vector $(1,\ldots,1)^\top$ are denoted as $\mathbf{0}$ and $\mathbf{1}$ respectively, and are with dimensions suitable to the context they appear.

A directed graph is a pair $(V,E)$ where 
$V$ is the set of \textit{nodes} and $E\subseteq V\times V$ the set of \emph{edges}.
In opinion formation models, 
a directed graph $(V,E)$ is often identified by its adjacency matrix
$\mathbf{T}$ which is a non-negative matrix such that $(i,j)\in E$ iff $t_{ij}>0$. 
\textit{Paths}, \textit{cycles} and and their \textit{lengths} 
in a directed graph are defined in the standard way.
A directed graph or equivalently an adjacency matrix 
is \emph{strongly connected} if there is a path 
from any node to any other node and it is \emph{aperiodic} 
if the greatest common divisor of the lengths of its cycles is one.

\section{Models of Opinion Formation}

In this section, we present our opinion formation model.
It involves $n$ permanent agents interacting continuously and
an external agent interacting temporarily with the permanent ones.
Outside the external agent's period of interaction,
our model reduces to the DeGroot Model in which 
the pattern of interaction is represented as
a $n\times n$ stochastic matrix $\mathbf{T}$.
An entry $t_{ij}$ of $\mathbf{T}$ represents the weight agent $i$ places
on agent $j$. The weights $t_{ij}$ for $j=1,2,\ldots,n$ are considered as finite resources,
distributed by $i$ to itself and others.
A positive $t_{ij}$ indicates $j$ is able to influence $i$
at each round of interaction; and the greater $t_{ij}$ is, 
the stronger the influence.
We refer to $\mathbf{T}$ as the \emph{interaction matrix} 
which can be seen as the adjacency matrix that captures
the social network structure.
Conventionally, opinions are represented as real numbers 
and time is measured in rounds of interactions.
We denote agent $i$'s opinion and the vector of the $n$ agents' opinions
after the $t$th round of interaction as $p_i^{(t)}$ and $\mathbf{p}^{(t)}$ respectively.
Moreover, we refer to $p_i^{(0)}$ as the \emph{initial opinion} of $i$ and $\mathbf{p}^{(0)}$ the \emph{initial opinion vector}.
The agents repeatedly interact by synchronously taking weighted averages
of the opinions of agents who can influence them,
that is $\mathbf{p}^{(t)}$ obeys the following equation for $t\geq 1$
\begin{equation}\label{eq:DeGroot}
\mathbf{p}^{(t)}=\mathbf{T}^t \mathbf{p}^{(0)}.
\end{equation}
The weight $t_{ij}$ is therefore the contribution of $j$'s opinion at each round of interaction to $i$'s opinion at the next round.
The weight $t_{ii}$ agent $i$ places on itself represents its openness to other agent's influence:
$t_{ii}=0$ indicates an open-minded agent who totally relies
on the others' opinions whereas $t_{ii}=1$ indicates a \emph{stubborn} agent  
whose opinion remains unchanged.

An opinion formation (as described by Equation~(\ref{eq:DeGroot})) is \emph{convergent} if 
$$\mathbf{p}^\infty=\lim_{t\rightarrow \infty}\mathbf{T}^t \mathbf{p}^{(0)}$$
exists for any $\mathbf{p}^{(0)}$.
A convergent opinion formation reaches a \emph{consensus} if 
all components of $\mathbf{p}^\infty$ are identical,
which happens when all rows of $\lim_{t\rightarrow \infty}\mathbf{T}^t$ are identical.
We refer to  $\mathbf{p}^\infty$ as the \emph{limiting opinion vector} and $p_i^\infty$ the limiting opinion of $i$.
It is shown that if the interaction matrix is strongly connected,
then a convergent opinion formation always reaches a consensus.
Moreover, an opinion formation with a strongly connected interaction matrix $\mathbf{T}$
is convergent iff $\mathbf{T}$ is aperiodic
or equivalently there is a unique left eigenvector $\mathbf{s}$ of $\mathbf{T}$, 
corresponding to eigenvalue 1 such that $\sum_{i=1}^n s_i=1$ and 
\begin{equation}\label{eq:influence_vector}
p_i^\infty=\left( \lim_{t\rightarrow\infty}\mathbf{T}^t\mathbf{p}\right)_i=\mathbf{sp}^{(0)}
\end{equation}
for $i=1,2,\ldots,n$.
The result establishes whether an opinion formation converges 
and what it converges to when it does.
Also the result 
implies $\lim_{t\rightarrow\infty}\mathbf{T}^t$ is a matrix with identical rows each of which is the unique left eigenvector $\mathbf{s}$.
As $p_i^\infty$ are identical for $i=1,2,\ldots,n$,
we refer to all of them as the \emph{limiting opinion}.  
See the survey \cite{journal/ARC/PROSKURNIKOV2017} for the other convergence and consensus conditions and \cite{book/Meyer} for the technical details on matrix.

According to Equation~(\ref{eq:influence_vector}), the limiting opinion
is a weighted average of the initial opinions where agent $i$'s weight is $s_i$. 
These weights are commonly taken as the measure of an agent's influence 
in a DeGroot-style opinion formation and are sometimes referred to as the agents' \textit{social influence} \cite{DeGroot,book/Jackson08}.
We refer to $\mathbf{s}$ as the \emph{social influence vector}.
Note that, $\mathbf{s}$ being a left eigenvector of $\mathbf{T}$ with an eigenvalue of 1 means $\mathbf{s}\mathbf{T}=\mathbf{s}$ which implies $s_i=\sum_{j=1}^n t_{ji}s_j$. 
Thus the social influence of $i$ is a weighted sum of the social influences of the various agents who can be influenced by $i$.
This is a very natural property of a measure of influence and entails that an influential person is one who is trusted by other influential persons. 
We adopt this measure of influence to quantify the external agent's 
influence in opinion formation.

The novelty of our model lies in the treatment of the external agent
that interacts with the permanent ones for a finite number of rounds.
We reserve the letter $k$ for this finite number which indicates 
the \textit{duration} of the external agent's influence.
To represent the pattern of interaction involving the external agent,
we extend the interaction matrix $\mathbf{T}$
to form a $(n+1)\times (n+1)$ interaction matrix $\mathbf{A}$
in which the external agent is identified as the $(n+1)$th agent 
whose corresponding weights occupy the $(n+1)$th row and column.
We refer to $\mathbf{A}$ as the \emph{extended interaction matrix}.
Since the external agent acts as an organisation or individual 
with the solitary goal of persuading the others of its opinion,
the external agent is, in our terminology, a stubborn agent, hence $a_{(n+1)(n+1)}=1$.
For the rest of the entries in the $(n+1)$th column,
$a_{i(n+1)}>0$ means the external agent is able to influence agent $i$.
We reserve the letter $m$ for the number of such positive entries  
which indicates the \textit{coverage} of the external agent's influence.
A simplifying assumption we make is 
that the $m>0$ agents place an identical weight of $\lambda$ to the external agent, 
that is $a_{i(n+1)}=\lambda$ whenever $a_{i(n+1)}>0$ and $i\leq n$.
The weight $\lambda$ indicates the \textit{intensity} of the external agent's influence.
We also assume, without loss of generality, that the $m$ agents
occupy the first $m$ rows and columns of $\mathbf{A}$.
We reserve the letter $\Lambda$ for the vector 
that occupies the first $n$ entries of the $(n+1)$th column.
Lastly, for each entry $a_{ij}$ with $1\leq i,j\leq n$, 
if $a_{i(n+1)}=0$,  then it inherit the corresponding entry $t_{ij}$ in $\mathbf{T}$,
otherwise it is shrank from $t_{ij}$ by a factor of $(1-\lambda)$ 
to make $\mathbf{A}$ stochastic.
Putting these together, we have
$${\mathbf{A} ={\begin{bmatrix}(1-\lambda)  t_{11}&\cdots &(1-\lambda)t_{1n}&\lambda\\\vdots &\ddots &\vdots&\vdots \\(1-\lambda) t_{m1}&\cdots &(1-\lambda) t_{mn}&\lambda\\
t_{(m+1)1}&\cdots & t_{(m+1) n}&0\\
\vdots &\ddots &\vdots&\vdots \\
t_{n1}&\cdots & t_{nn}&0\\
0&\cdots &0&1\end{bmatrix}}}$$
where $m$ is the number of agents the external agent can influence,
$t_{ij}$ are entries of $\mathbf{T}$, 
and $\lambda$ is the weight placed on the external agent.

The $n\hspace{-0.4mm}+\hspace{-0.4mm}1$ agents interact exactly as in the DeGroot model
only that it is now governed by both $\mathbf{T}$ and $\mathbf{A}$.
The opinion vector $\mathbf{p}^{(t)}$ obeys Equation~(\ref{eq:DeGroot})
when the external agent does not participate
and the following one when it does.
\begin{equation}\label{eq:DeGroot_tp}
\mathbf{p}^{(t)}=
     \left(\mathbf{A} \left(\begin{array}{c}
     \mathbf{p}^{(t-1)}\\
     a
    \end{array}\right)\right)_{1,\ldots,n}
\end{equation}
\noindent
where $a$ is the external agent's unchanged opinion.
For a matrix $\mathbf{W}$,
we denote the matrix formed by the first $n$ rows
of $\mathbf{W}$ as $(\mathbf{W})_{1,\ldots,n}$.
For example, $\big(\begin{smallmatrix}
     a & b \\
     c & d \\
     e & f
\end{smallmatrix}\big)_{1,2}=
    \big(\begin{smallmatrix}    
     a & b \\
     c & d 
\end{smallmatrix}\big)$.
To illustrate this interaction, suppose the external agent 
participates in the third and forth rounds of interaction,
then 
$$\mathbf{p}^{(4)}= \left(\mathbf{A}^2 \left(\begin{array}{c}
     \mathbf{T}^2\mathbf{p}^{(0)}\\
     a
    \end{array}\right)\right)_{1,\ldots,n}$$
and $\mathbf{p}^\infty=\lim_{t\rightarrow \infty}\mathbf{T}^t \mathbf{p}^{(4)}$.

Following Equation~(\ref{eq:influence_vector}), after expressing the limiting opinion 
as the weighted average of the $n\hspace{-0.4mm}+\hspace{-0.4mm}1$ agents' initial opinions, that is
\begin{equation}\label{eq:soical_influence}
p_i^\infty=w_1 p_1^{(0)}+w_2 p_2^{(0)}+\cdots+w_n p_n^{(0)}+w_{(n+1)} a
\end{equation}
we take weight $w_{(n+1)}$ as the external agent's social influence.
Due to the external agent's intervention,
the social influence vector $\mathbf{s}$ no longer 
gives the accurate social influence for the permanent agents.
It does so if the external agent did not participate at all in which case our model reduces to the DeGroot model.

For the rest of this paper, we assume all interaction matrices are strongly connected
and aperiodic to ensure an opinion formation always reaches a consensus, for otherwise
our measure of influence cannot be defined.
Also we assume, without loss of generality, 
that the external agent's unchanged opinion is $1$ 
and every permanent agent has an initial opinions of $0$.
Note that agents' opinions are irrelevant to their influence
and we only concern with the external agent's influence.
By this assumption, it follows from Equation~(\ref{eq:soical_influence}) that
the limiting opinion is precisely the external agent's social influence.

\section{Analytical Results}

We have argued that an organisation's influencing effort
depends on the coverage, duration, intensity, and timing of its influence.
Our model captures these factors respectively 
as the number $m$ of agents the external agent can influence;
the number $k$ of rounds of interactions the external agent participates;
the weight $\lambda$ that is placed on the external agent; and 
the time points in which the $k$ rounds of interaction take place. 
Of those factors, the first three reflect the amount of resource 
available for the influencing effort, 
the more resource there is, the larger these factors' values.
As organisations usually, if not always, have a finite resource,  
a vital question is how and when to allocate it for achieving maximum influence.  

Devoted to analytical results, in this section, 
we express the external agent's influence as
a function of the dependent factors and obtain influence maximising insights 
by analysing the function.
For the ease of presentation, 
we decompose the multiplication with the matrix $\mathbf{A}$ 
in Equation~(\ref{eq:DeGroot_tp}) and rewrite it as
\begin{equation}\label{eq:update_partial}
  \mathbf{p}^{(t)}=(\mathbf{T}-\lambda (\mathbf{T})_m)\mathbf{p}^{(t-1)}+\Lambda.
\end{equation}
\noindent
Recall that we assume $a = 1$ and
$\Lambda$ is the n-dimensional vector of which
the first $m$ components are $\lambda$ and the rest are $0$.     
For a matrix $\mathbf{T}$, 
$(\mathbf{T})_m$ is the matrix formed by replacing all entries of $\mathbf{T}$ with zero except those of its first $m$ rows.
For example  $\big(\begin{smallmatrix}
     a & b \\
     c & d 
\end{smallmatrix}\big)_{1}=
    \big(\begin{smallmatrix}    
     a & b \\
     0 & 0 
\end{smallmatrix}\big)$.

One might have noticed that the timing factor 
is intrinsically different from the other three.
Apart from being unaffected by the scarcity of resource, more importantly,
the timing factor is not a single factor, 
but a collection of $k$ factors,
one for each of the $k$ rounds of participation.
It is impractical and pointless to consider all variations of the $k$ factors,
instead we focus on three timing options
that echo real-world situations:
(1) the external agent only participates after 
the other agents have reached a consensus; 
(2) the external agent participates for the first $k$ rounds of interaction;
and (3) the external agent randomly participates for $k$ rounds of interaction
according to a uniform distribution over a range of time points.
And we refer to them as 
\textit{consensus}, \textit{start}, and \textit{uniform} respectively. 
For now we only concern with the consensus timing option.
Later on we will explain how the other two 
cause the explosion in the number of variables 
that makes function analysis infeasible.
In fact, ``consensus'' is the most important of the three as 
simulations show that it  gives rise to
the largest social influence for the external agent.

The consensus timing option resembles 
the real-world situation in which an organisation always allow sufficient time for 
its target to thoroughly digest its influence from the previous intervention
before intervening and exerting its influence again. 
From a function analysis perspective, 
it keeps the number of variables small which results in 
a succinct expression for the limiting opinion.

\begin{theorem}\label{thm:limiting_partial}
In an opinion formation, if the external agent can influence $m\leq n$ agents 
and it participates in $k\geq 1$ rounds of interaction, each of which is
at a time point the other agents have reached a consensus, then 
$$p_i^\infty=\textstyle\sum_{j=0}^{k-1}(1-s\lambda)^j s \lambda$$
for all $i$,
where $s=\sum_{i=1}^m s_i$ for $\mathbf{s}$ the social influence vector.
\end{theorem}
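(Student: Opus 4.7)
The plan is to reduce the problem to a one-dimensional recursion on the consensus value. Because of the consensus timing option, between any two participations of the external agent the permanent agents have enough time to fully re-converge to a new consensus. So instead of tracking the full opinion vector through $k$ rounds of external intervention, I only need to track the scalar consensus value $c_j$ that the permanent agents hold immediately before the $(j+1)$st intervention. By the assumption that every permanent agent starts with opinion $0$, we have $c_0 = 0$, and the goal reduces to finding a closed form for $c_k$ and showing it equals the sum in the statement.

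The key computation is the one-step update: if the permanent agents are in the consensus state $\mathbf{p}^{(t-1)} = c\mathbf{1}$ and the external agent participates once, producing $\mathbf{p}^{(t)}$ via Equation~(\ref{eq:update_partial}), what consensus $c'$ do they converge to afterwards? I would apply the formula directly: since $\mathbf{T}$ is stochastic we have $\mathbf{T}\mathbf{1} = \mathbf{1}$, and $(\mathbf{T})_m \mathbf{1}$ is the indicator vector whose first $m$ entries are $1$ and the rest are $0$ (call it $\mathbf{e}_m$). Thus $\mathbf{p}^{(t)} = c\mathbf{1} - \lambda c \mathbf{e}_m + \Lambda = c\mathbf{1} + \lambda(1-c)\mathbf{e}_m$, using $\Lambda = \lambda \mathbf{e}_m$. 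After the external agent leaves, the opinions evolve under $\mathbf{T}$ alone, and by Equation~(\ref{eq:influence_vector}) the new consensus is $\mathbf{s}\mathbf{p}^{(t)} = c\mathbf{s}\mathbf{1} + \lambda(1-c)\mathbf{s}\mathbf{e}_m = c + \lambda(1-c)s$, since $\mathbf{s}\mathbf{1}=1$ and $\mathbf{s}\mathbf{e}_m = \sum_{i=1}^m s_i = s$ by the definition of $s$. Rearranging gives the affine recurrence
\begin{equation*}
c_{j+1} = (1 - s\lambda)\, c_j + s\lambda.
\end{equation*}

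With $c_0 = 0$, unrolling this recurrence (or a straightforward induction on $j$) yields $c_j = \sum_{i=0}^{j-1} (1-s\lambda)^i s\lambda$, and setting $j = k$ gives the desired expression. Since the theorem asserts $p_i^\infty$ is the same scalar for every $i$, the final step is just to note that the limiting opinion after the $k$th intervention is again a consensus (the permanent agents are once more left to re-converge under $\mathbf{T}$), so $p_i^\infty = c_k$ for every $i$. The main obstacle is really only the one-step computation: making sure that the algebraic simplification $(\mathbf{T} - \lambda(\mathbf{T})_m)(c\mathbf{1}) + \Lambda = c\mathbf{1} + \lambda(1-c)\mathbf{e}_m$ is done correctly, since everything else is a routine linear recurrence solved by induction.
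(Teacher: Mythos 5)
Your proposal is correct and is essentially the paper's own argument: the paper also proceeds intervention-by-intervention (an induction on $k$), and its induction step is exactly your one-step computation, just written with the limiting matrix $\mathbf{S}=\lim_{t\to\infty}\mathbf{T}^t$ acting on the constant vector $c\mathbf{1}$ rather than with the scalar recurrence $c_{j+1}=(1-s\lambda)c_j+s\lambda$. Your scalar bookkeeping is a slightly cleaner packaging of the same idea.
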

\begin{proof}
Let $\mathbf{S}=\lim_{t\rightarrow\infty} \mathbf{T}^t$, 
so each row of $\mathbf{S}$ is
the social influence vector $\mathbf{s}$.
We prove by induction on $k$ that 
$\mathbf{p}^\infty=\sum_{j=0}^{k-1}(1-s\lambda)^j (s\lambda, \ldots, s\lambda)^\top$
where $s=\sum_{i=1}^m s_i$.
For the base case of $k=1$.
It follows from Equation~(\ref{eq:DeGroot}) and (\ref{eq:update_partial}) that
\begin{align*}
 \mathbf{p}^\infty &= \mathbf{S} ((\mathbf{T}-\lambda (\mathbf{T})_m)\mathbf{0}+\Lambda)\\
 &= \mathbf{S} (\lambda,\ldots,\lambda,0,\ldots,0)^\top\\
 &= (\textstyle\sum_{i=1}^m s_i \lambda, \ldots, \textstyle\sum_{i=1}^m s_i \lambda)^\top\\
 &=(s\lambda, \ldots, s\lambda)^\top
\end{align*}
For the induction step, suppose 
$\mathbf{p}^\infty=\sum_{j=0}^{k-1}(1-s\lambda)^j (s\lambda, \ldots, s\lambda)^\top$
for $k=l$.
We need to show the equality also holds for $k=l+1$.
Let $\mathbf{a}$ be the limiting opinion vector
for when the external agent participates for $l$ rounds of interaction.
Due to the induction hypothesis 
$\mathbf{a}=\sum_{j=0}^{l-1}(1-s\lambda)^j (s\lambda, \ldots, s\lambda)^\top$.
Then for $k=l+1$ we have
\begin{align*}
\mathbf{p}^\infty
&=\mathbf{S}((\mathbf{T}-\lambda (\mathbf{T})_m)\mathbf{a}+\Lambda) \\
&=\mathbf{S}\mathbf{T}\mathbf{a}-\lambda\mathbf{S}(\mathbf{T})_m\mathbf{a} +\mathbf{S}\Lambda\\
&=\mathbf{a}-\lambda \mathbf{S}(\mathbf{T}\mathbf{a})_m+\mathbf{S}\Lambda \\
&=\mathbf{a}-\lambda \mathbf{S}(\mathbf{a})_m+(s\lambda, \ldots, s\lambda)^\top\\
&=\mathbf{a}-\lambda (\textstyle\sum_{i=1}^m s_i)\mathbf{a} +(s\lambda, \ldots, s\lambda)^\top\\
&=(1-s\lambda)\mathbf{a}+(s\lambda, \ldots, s\lambda)^\top \\
&=\textstyle\sum_{j=0}^{l}(1-s\lambda)^j (s\lambda, \ldots, s\lambda)^\top
\end{align*}
This complete the induction step.
Thus we have $p_i^\infty=\sum_{j=0}^{k-1}(1-s\lambda)^j s \lambda$ for all $i$.

\end{proof}

\noindent
Essentially, the limiting opinion is the sum of the first $k$ terms of a geometric series with start term $s\lambda$ and constant ratio $1-s\lambda$ 
where the scalar $s$ is the sum of the first $m$ components of the social influence vector $\mathbf{s}$. 
Since our model does not specify the identities of the $m$ agents that can be influenced by the external agent, 
the same value of $m$ can lead to different values of $s$.
So ultimately it is the value of $s$ rather than $m$ that matters
in determining the limiting opinion.

We can represent the limiting opinion, 
which is the external agent's social influence, as a function
$I:X\times Y\times Y\rightarrow\mathbb{R}$ such that 
$$I(k,\lambda,s)=\textstyle\sum_{j=0}^{k-1}(1-s\lambda)^j s \lambda$$
where $X=\{x\in \mathbb{Z}\,|\,x\geq 1\}$ and $Y=\{y\in \mathbb{R}\,|\,0<y<1\}$.
As explained, the variable $m$ should not appear in the function and
$s$ is the combined social influence of the $m$ agents 
for when the external agent did not participate.
Substituting $I(k,\lambda,s)$ with the formula for the sum of a geometric series, we have
\begin{equation}\label{eq:geometric}
    I(k,\lambda,s)=1-(1-s\lambda)^k
\end{equation}
It is easy to see that $I(k,\lambda,s)$ increases as $s$ gets lager.
Since a few influential agents 
can have the same combined social influence as 
that of many less influential ones,
a large value of $m$ does not necessarily give a large value of $s$.
This leads to our first insight through function analysis:
\begin{quote}
\textit{(A1) if the timing option is consensus,
then the external agent should aim for the more influential ones to maximise its social influence.}
\end{quote} 
What this analysis also tells us is that the number of agents that can be
influenced by the external agent is not the most accurate measure of ``coverage.''
A better choice is the combined social influence of such agents.

With the function $I(k,\lambda,s)$,
we note that its three variables
have nothing to do with the
structure of the social network, which means the latter has no
impact on the external agent’s social influence. 
This leads to our second insight:

\begin{quote}
\textit{(A2) if the timing option is consensus,
then the structure of the social network is irrelevant to the external agent's social
influence.}
\end{quote} 

\noindent
The insight might seem trivial, 
nevertheless it is of great significance in practice.
Often a social network's structure 
is unknown to an external organisation, 
so knowing that it is irrelevant brings certainty and assurance
to an organisation's influencing effort.
Hence, we consider this irrelevance as an advantage of the ``consensus'' timing option over the others with which 
the structure does matter. 

With $I(k,\lambda,s)$, we also note that
the extra influence accumulated by participating one more round
of interaction is $I(k+1,\lambda,s)-I(k,\lambda,s)$.
Substituting $I(k+1,\lambda,s)$ and $I(k,\lambda,s)$ with their full expression, 
we have $$I(k+1,\lambda,s)-I(k,\lambda,s) =(1-s\lambda)^k s \lambda.$$
Since $0<1-s\lambda<1$, $I(k+1,\lambda,s)-I(k,\lambda,s)$ decreases as $k$ gets larger.
This leads to our third insight:
\begin{quote}
\textit{(A3) if the timing option is consensus,
then the external agent's influencing effort become less and less effective
as it participates in more rounds of interaction.}
\end{quote}

Next, we will analyse $I(k,\lambda,s)$
to decide which of the three variables $s$, $k$, and $\lambda$
has a more profound effect on the external agent's social influence.
As an immediate consequence of Equation~(\ref{eq:geometric}),
the following lemma shows that
$I(k,\lambda,s)$ gains the same increase
by scaling up $s$ as it does by scaling up $\lambda$.

\begin{lemma}\label{lem:c_vs_i}
Let $r\in \mathbb{R}$, $r\geq 1$, $rs<1$ and $r\lambda<1$.
Then
$$I(k,r\lambda,s)=I(k,\lambda,rs).$$
\end{lemma}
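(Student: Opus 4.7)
The plan is to leverage the closed-form expression already derived in Equation~(\ref{eq:geometric}), namely $I(k,\lambda,s) = 1 - (1-s\lambda)^k$. The lemma says that swapping which of the two factors $s$ and $\lambda$ gets scaled by $r$ leaves $I$ unchanged, and this is manifestly true in the closed form because $s$ and $\lambda$ enter only through their product $s\lambda$.

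Concretely, I would proceed in three short steps. First, I would check the hypotheses ensure that both sides of the claimed equality lie in the domain of $I$: the conditions $r \geq 1$, $rs < 1$, and $r\lambda < 1$ (together with the standing assumption $s, \lambda \in (0,1)$) guarantee that $(k, r\lambda, s)$ and $(k, \lambda, rs)$ are both valid arguments of $I$, i.e. both second and third coordinates lie in $(0,1)$. Second, I would invoke Equation~(\ref{eq:geometric}) to rewrite each side:
\begin{equation*}
I(k, r\lambda, s) = 1 - (1 - s(r\lambda))^k, \qquad I(k, \lambda, rs) = 1 - (1 - (rs)\lambda)^k.
\end{equation*}
Third, I would observe that $s(r\lambda) = rs\lambda = (rs)\lambda$ by commutativity and associativity of real multiplication, so the two right-hand sides are identical.

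The main obstacle is essentially nonexistent — the lemma is an immediate symmetry consequence of the product structure $s\lambda$ that emerged when the geometric series was summed in Equation~(\ref{eq:geometric}). The only point that requires any care is the domain check, since the statement of $I$ restricts $\lambda$ and $s$ to $(0,1)$; if one instead tried to prove the lemma directly from the original summation $\sum_{j=0}^{k-1}(1-s\lambda)^j s\lambda$, each side factors as $s\lambda$ times the same geometric sum in $(1-rs\lambda)$, and one arrives at the same conclusion without appealing to the closed form. Either route makes the proof a one-line substitution.
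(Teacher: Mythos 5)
Your proposal is correct and matches the paper's intent exactly: the paper states Lemma~\ref{lem:c_vs_i} as ``an immediate consequence of Equation~(\ref{eq:geometric})'' without further proof, relying on precisely the observation that $s$ and $\lambda$ enter $I$ only through their product $s\lambda$, so $s(r\lambda)=(rs)\lambda$ gives the equality. Your added domain check is a harmless bit of extra care.
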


\noindent
Note that both the weight the other agents place on the external agent and 
the combined social influence of the agents that can be influenced by the external one 
are percentages, thus the precondition in Lemma~\ref{lem:c_vs_i}.
The lemma leads to our four insight:
\begin{quote}
\textit{(A4) if the timing option is consensus,
then it is equally effective to scale up the coverage or intensity factor 
to maximise the external agent's social influence. }
\end{quote}

\noindent
Since $\lambda$ and $s$ are equally important in determining the value of $I(k,\lambda,s)$,
it remains to compare either one of them with $k$.
The following lemma shows that  $I(k,\lambda,s)$ increases more by scaling up $s$ than it does by scaling up $k$.

\begin{lemma}\label{lem:c_vs_d}
Let $r\in \mathbb{Z}$, $r\geq 2$, $rs<1$ and $r\lambda<1$.
Then
$$I(k,\lambda,rs)> I(rk,\lambda,s).$$
\end{lemma}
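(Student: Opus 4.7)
The plan is to reduce the claim to Bernoulli's inequality by invoking the closed form in Equation~(\ref{eq:geometric}). Specifically, with
$$I(k,\lambda,s) = 1-(1-s\lambda)^k,$$
the inequality $I(k,\lambda,rs) > I(rk,\lambda,s)$ is equivalent to
$$(1-rs\lambda)^k < (1-s\lambda)^{rk} = \bigl((1-s\lambda)^r\bigr)^k.$$

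Next, I would verify positivity in order to take $k$-th roots without flipping the inequality. Since $s,\lambda\in(0,1)$ we have $0<s\lambda<1$, so $(1-s\lambda)^r > 0$; and since $\lambda<1$, the assumption $rs<1$ yields $rs\lambda < 1$, so $1-rs\lambda > 0$ as well. With both bases positive and $k\geq 1$, the displayed inequality reduces to
$$1 - rs\lambda < (1-s\lambda)^r.$$

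This is exactly the strict form of Bernoulli's inequality applied with $x=-s\lambda\in(-1,0)$ and integer exponent $r\geq 2$: $(1+x)^r > 1+rx$ whenever $x \neq 0$ and $r\geq 2$. Since $s\lambda>0$, strictness holds, and the lemma follows.

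The main obstacle is really only bookkeeping rather than mathematical depth: one must be careful to track which side of the inequality is larger through the exponentiation steps, and to justify that $1-rs\lambda$ stays positive (so that comparing $k$-th powers is legitimate). Once the reduction to Bernoulli is made, the conclusion is immediate; the only subtle point is remembering that $r$ must be an integer $\geq 2$ to invoke the strict Bernoulli inequality, which is exactly why the hypothesis of the lemma requires $r\in\mathbb{Z}$ with $r\geq 2$.
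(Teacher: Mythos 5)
Your proof is correct and takes essentially the same route as the paper: both reduce, via the closed form $I(k,\lambda,s)=1-(1-s\lambda)^k$, to the key inequality $1-rs\lambda<(1-s\lambda)^r$. The only difference is that you invoke the strict Bernoulli inequality by name where the paper proves that same statement by induction on $r$; your explicit positivity check before comparing $k$-th powers is a detail the paper leaves implicit.
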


\begin{proof}
Since, according to Equation~(\ref{eq:geometric}),
$I(k,\lambda,rs)=1-(1-rs\lambda)^k$ and $I(rk,\lambda,s)=1-(1-s\lambda)^{rk}$,
it suffices to show $(1-rs\lambda) <(1-s\lambda)^r$.
We will prove by induction on $r$.
For the base case, we have $r=2$.
Then $(1-2s\lambda) <(1-s\lambda)^2$ follows from 
$(1-s\lambda)^2=1-2s\lambda+(s\lambda)^2$ and $s\lambda>0$.

For the induction step, suppose $(1-rs\lambda) <(1-s\lambda)^r$ holds for $r=l$, 
we need to show it also holds for $r=l+1$.
\begin{align*}
1-rs\lambda & = 1-(l+1)s\lambda\\
            &= (1-ls\lambda)-s\lambda
\end{align*}
and 
\begin{align*}
(1-s\lambda)^r &= (1-s\lambda)^{l+1}\\
              &= (1-s\lambda)^l(1-s\lambda)\\
              &= (1-s\lambda)^l-s\lambda(1-s\lambda)^l.
\end{align*}
We have by the induction hypothesis that $(1-ls\lambda)<(1-s\lambda)^l$.
Also since $(1-s\lambda)^l<1$, we have $s\lambda>s\lambda(1-s\lambda)^l$.
It follows from $(1-ls\lambda)<(1-s\lambda)^l$ and $s\lambda>s\lambda(1-s\lambda)^l$ that
$(1-ls\lambda)-s\lambda< (1-s\lambda)^l-s\lambda(1-s\lambda)^l$
which implies $1-rs\lambda<(1-s\lambda)^r$.
This completes the proof.

\end{proof}

\noindent
Note that it is meaningless to scale up the number of participation
rounds by a non-integer or by the integer 1, 
thus the precondition of Lemma~\ref{lem:c_vs_d}.
The lemma leads to our fifth insight:
\begin{quote}
\textit{(A5) if the timing option is ``consensus,''
then it is more effective to scale up the coverage or intensity factor
than the duration factor to maximise the external agent's social influence.}
\end{quote}

Finally, it is not uncommon that 
an organisation can influence everyone in its target group.
For example, this may happen if the group is relatively small 
with respect to the organisation's resource. 
In the remaining of this section, 
we deal with this special but realistic ``full coverage'' case.

In our model, full coverage means $m=n$
with which Equation~(\ref{eq:update_partial}) reduces to
\begin{equation}\label{eq:update_full}
    \mathbf{p}^{(t)}=(1-\lambda)\mathbf{T}\mathbf{p}^{(t-1)}+\Lambda
\end{equation}
\noindent
where $\Lambda$ is the $n$-dimensional vector $(\lambda,\ldots,\lambda)^\top$.

The distinguishing property of the full coverage case is that 
the timing factor does not play a part in determining
the external agent's social influence.  
The key to appreciating this property is the following easily verifiable equality.
$$(1-\lambda)\mathbf{T}(\mathbf{T}\mathbf{p}^{(t)})+\Lambda=\mathbf{T} ((1-\lambda)\mathbf{T}\mathbf{p}^{(t)}+\Lambda)$$
By Equation~(\ref{eq:DeGroot}) and (\ref{eq:update_full}), 
if the current opinion vector is $\mathbf{p}^{(t)}$
then the LHS of the equality is 
the opinion vector after two rounds of interaction 
where the external agent participates in the second round and
the RHS is the opinion vector after two rounds of interaction 
where the external agent participates in the first round.
Since this ``one-step'' change 
does not affect the opinion vector after two rounds of interaction
neither does it to the limiting opinion.
By realising that we can repeat this ``one-step'' change
for any number of times to have the $k$ rounds
of participation at any $k$ time points without affecting 
the limiting opinion, the property must hold.
This leads to our final insight obtained through function analysis.

\begin{quote}
\textit{(A6) If the external agent can influence all agents, 
then the timing factor is irrelevant to its social influence.}
\end{quote}

\section{Simulation Results}

Function analysis has its limits, 
in this section, we resort to simulations to 
obtain insights tied to the timing options.
Like it or not, simulation might be our last resort for
the start and uniform timing options.
Suppose the external agent participates in 
the first and $r$th rounds where $r$ is $2$ for ``start'' and 
an arbitrary number for ``uniform.''
Then the limiting opinion and thus the external agent's social influence can be derived 
as follows
\begin{align*}
\mathbf{p}^\infty&=\lim_{t\rightarrow\infty} \mathbf{T}^t \mathbf{p}^{(r)}\\
&=\lim_{t\rightarrow\infty} \mathbf{T}^t((\mathbf{T}-\lambda (\mathbf{T})_m)\mathbf{p}^{(r-1)}+\Lambda)\\
&=\lim_{t\rightarrow\infty} \mathbf{T}^t((\mathbf{T}-\lambda (\mathbf{T})_m)\mathbf{T}^{r-2}\mathbf{p}^{(1)}+\Lambda)\\
&=\lim_{t\rightarrow\infty} \mathbf{T}^t((\mathbf{T}-\lambda(\mathbf{T})_m)\mathbf{T}^{r-2} \Lambda
+\Lambda)\\
&=\lim_{t\rightarrow\infty} \mathbf{T}^t(\mathbf{T}^{r-1}\Lambda-\lambda(\mathbf{T}^{r-1})_m \Lambda+\Lambda)\\
&=2\lim_{t\rightarrow\infty} \mathbf{T}^t\Lambda-\lambda\lim_{t\rightarrow\infty} \mathbf{T}^t(\mathbf{T}^{r-1})_m\Lambda.
\end{align*}
Obviously, the matrix $(\mathbf{T}^{r-1})_m$ plays a part in determining $\mathbf{p}^\infty$,
meaning that any variation of the social network can result 
in a very different social influence of the external agent.
Thus, the function that expresses the social influence 
must consider all entries of $\mathbf{T}$ (as well as the $k$ timing factors).
This is an overwhelmingly large number of variables
for function analysis to be feasible.
For ``concensus'' however $(\mathbf{T}^{r-1})_m$ 
is cancelled out in the above derivation, 
making it irrelevant for determining 
the social influence (see the proof of Theorem~\ref{thm:limiting_partial}).


In this paper, all simulations are conducted with 
100 agents and with 3000 rounds of interactions. 
Additionally, 1000 simulations are conducted 
for each combination of values for the intensity, duration, coverage and timing factor.
Experimenting with various simulation settings 
shows that a larger number of agents and simulations 
makes no difference to the exhibited patterns,
which are already evident with as little as 10 agents and 10 simulations.
Moreover, the patterns do not rely on any specific factor values, 
though some values make them easy to visualise.

Our first set of simulations intends to disentangle the
varying effects of the timing options as the duration value grows.
While holding the intensity and coverage factor constant, 
for each timing option, we simulate our model for 
duration values ranging from 0 to 45 (with an increment value of 1).
In Figure~\ref{fig:duration}, 
we plot the external agent's average social influence (in 1000 simulations) induced by 
each timing option against the duration values.

\begin{figure}[ht]
\centering
\includegraphics[width=\linewidth]{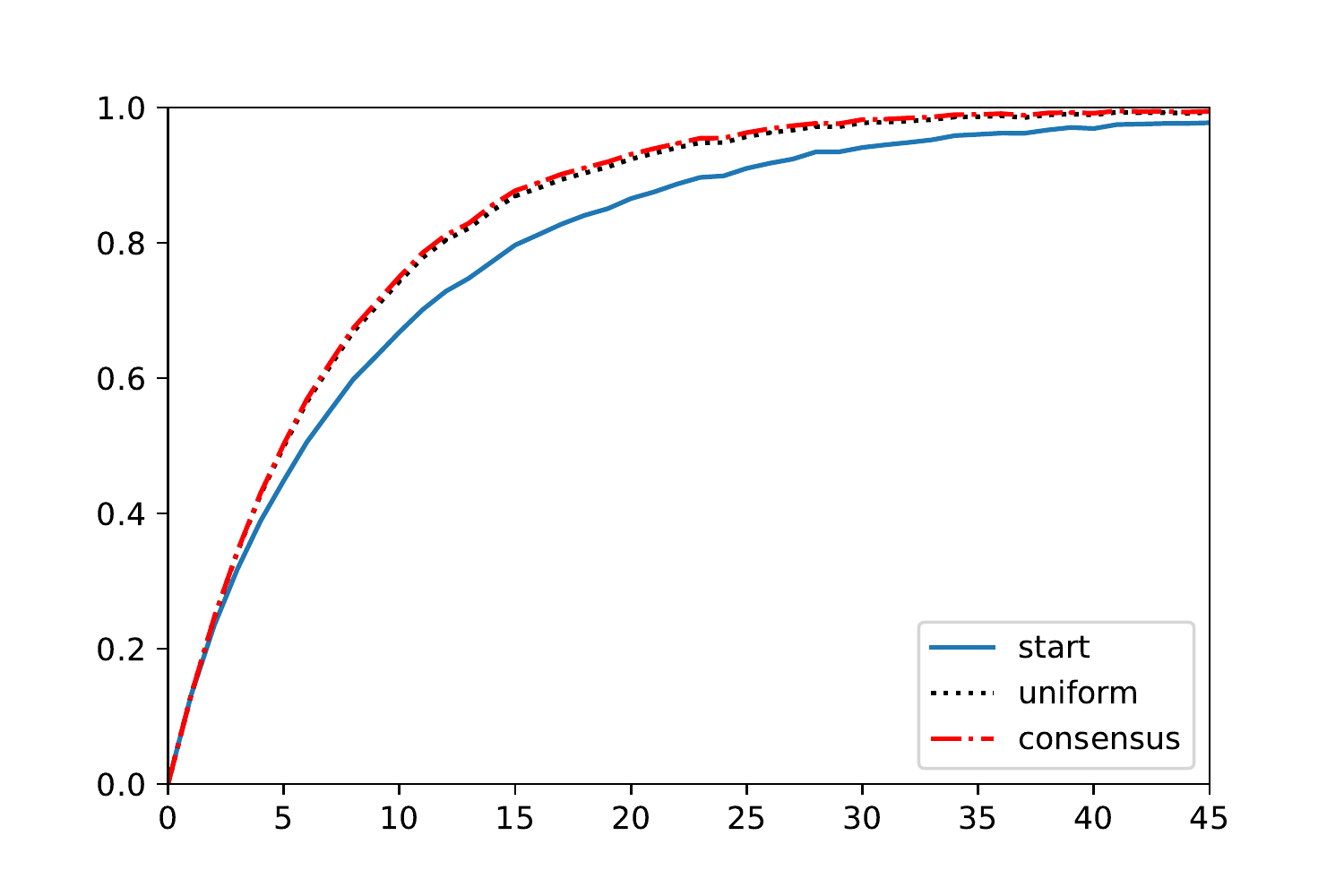}
\caption{Comparing the timing options with respect to a growing duration factor.}
\label{fig:duration}
\end{figure}

\noindent
Immediately we observe that the plot for ``consensus''
is virtually always above those of ``start'' and ``uniform,''
with a noticeable gap over the former and a tiny one over the latter.
More specifically, the gap between the plots is closing towards both ends of 
the horizontal axis and becomes negligible at the very ends. 
The plots suggest ``consensus'' and ``start''  respectively
gives the largest and smallest social influence while ``uniform'' is almost identical to 
``consensus.''
Closing of the gap, however, cannot be interpreted as suggesting 
the less relevance of the timing options with small and large duration values,
rather the pattern is enforced by the bounded nature of social influence values.
Social influence is bounded from below by 0 and from above by 1, 
thus if the duration value is sufficiently small or large, 
then the induced social influence must be close to 0 or 1 irregardless of the timing option.

\begin{figure}[ht]
\centering
\includegraphics[width=\linewidth]{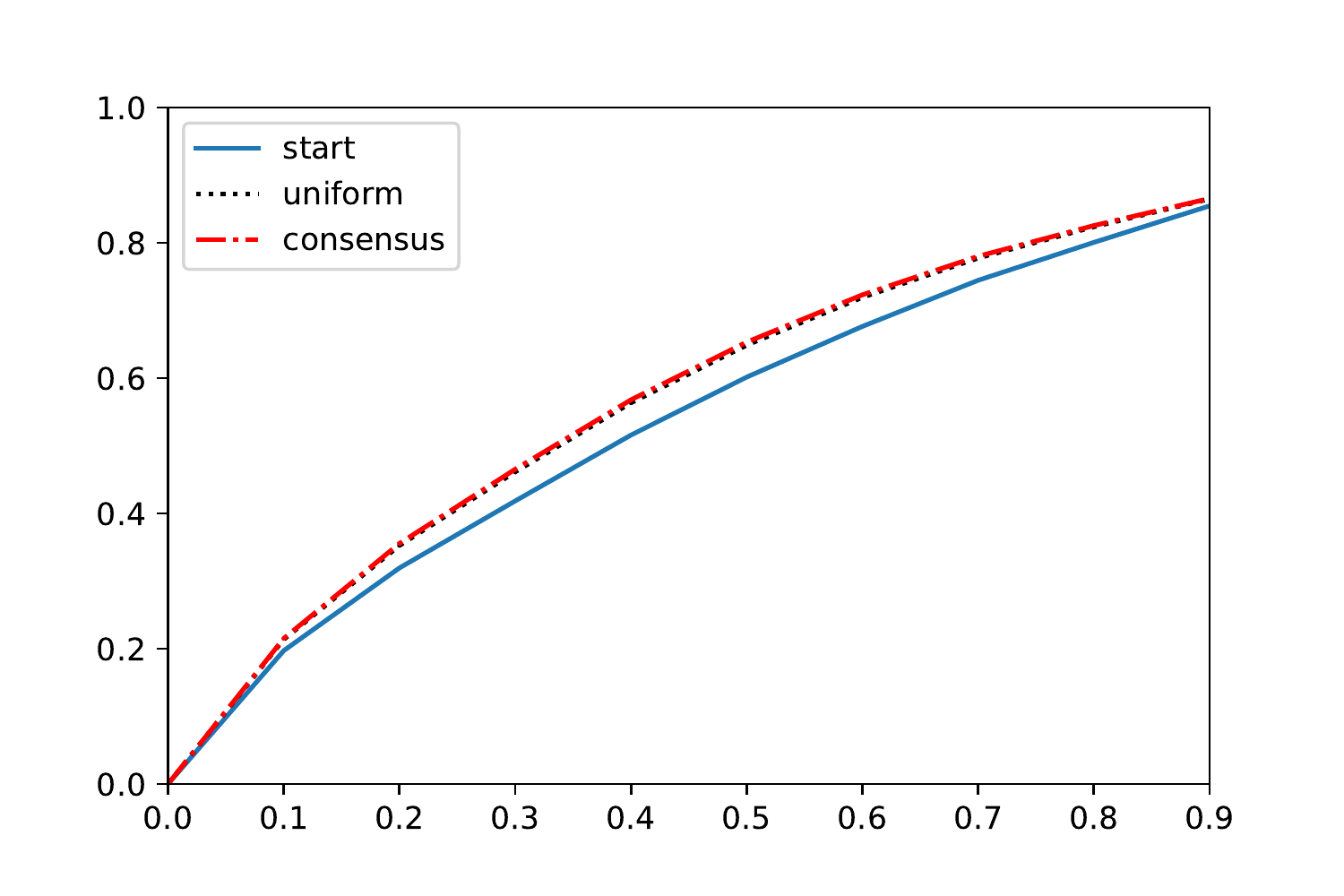}
\caption{Comparing the timing options with respect to a growing coverage factor}
\label{fig:coverage}
\end{figure}

In a similar fashion,
while holding the duration and intensity factors constant, 
for each timing option, we simulate our model for 
coverage values ranging from 0 to 0.9 (with an increment value of 0.1).
In Figure~\ref{fig:coverage}, 
the average social influence induced by each timing option is plotted 
against the coverage values.
Once again, we observe that ``consensus'' is the clear winner against ``start''
but not so much against ``uniform''.
Also the gap is closing towards both ends of the horizontal axis.
Likewise, the bounded nature of social influence plays a part in the closing of the gap.
But this time, as the coverage value grows, the gap closes so drastically 
that the plots converge at a social influence (i.e., 0.8) well below 1.
This means the growing coverage value also plays a part.
So, unlike the duration value, it is righteous to interpret the closing of the gap
as suggesting the larger the coverage value the less relevant the timing options.
In fact, we have already proved a limiting case of this pattern in (A6) which 
concludes the irrelevance of the timing factor when the coverage value is at its largest.
Hence, we obtain the following insight.

\begin{quote}
\textit{(S1) Once the coverage value passes certain threshold,\footnote{The threshold depends on the intensity and duration values.} 
then the larger the coverage value, the less relevant the timing options 
in determining the external agent's social influence.}
\end{quote}

\begin{figure}[ht]
\centering
\includegraphics[width=\linewidth]{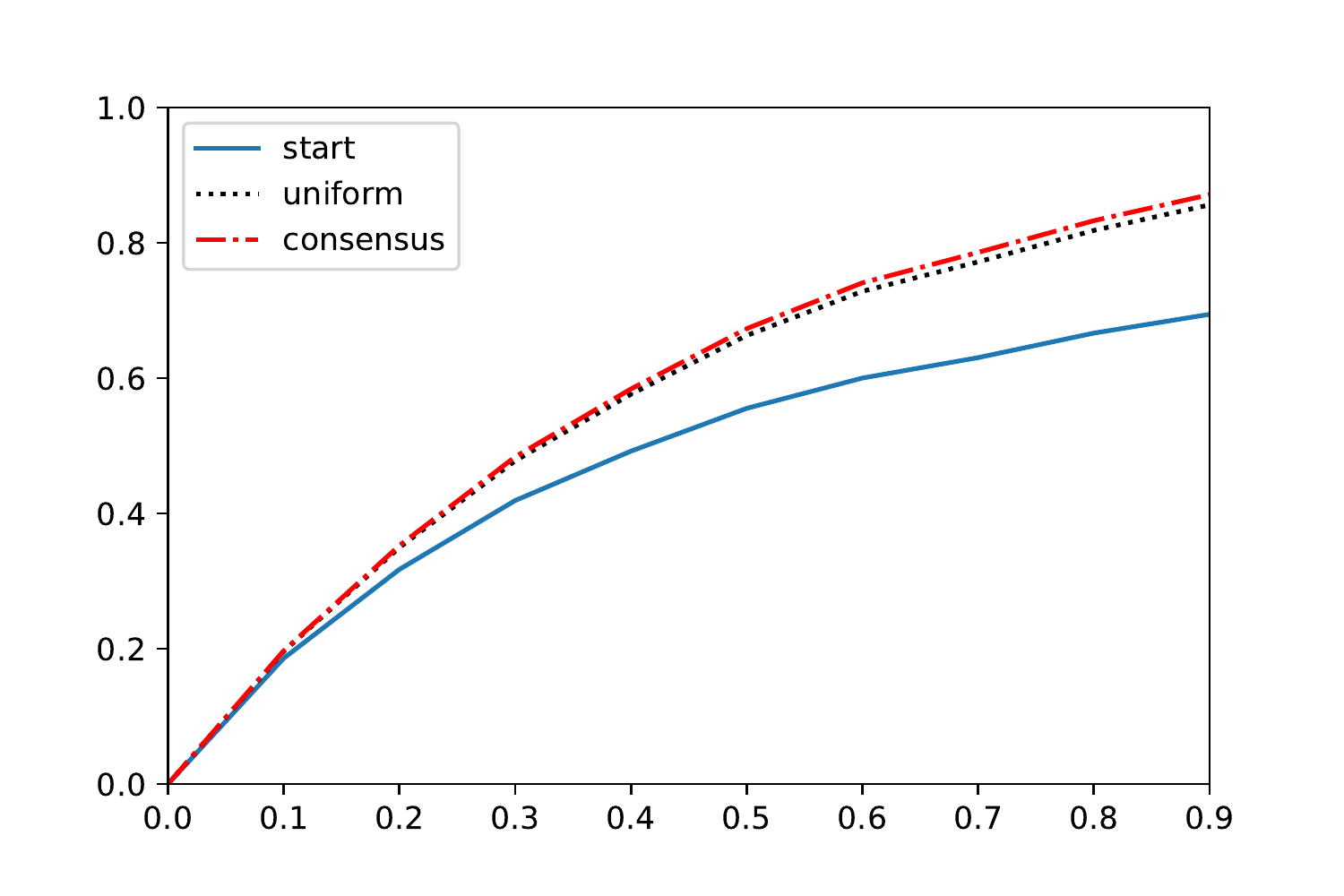}
\caption{Comparing the timing options with respect to a growing intensity factor}
\label{fig:intensity}
\end{figure}

Finally, we hold constant the duration and coverage factors and simulate our model for 
intensity values ranging from 0 to 0.9 (with an increment value of 0.1).
The corresponding plots are given in Figure~\ref{fig:intensity}.
Yet another time, we observe the superiority of ``consensus'' over
``start'' and ``uniform,'' but only slightly over the latter.
This unequivocal pattern exhibited in all three sets of simulations 
leads to the following insight.

\begin{quote}
\textit{(S2) Among the three timing options,
``consensus'' gives rise to the largest 
social influence of the external agent and ``start'' the least.}
\end{quote}

\noindent
Furthermore, the fact that ``consensus'' and ``uniform'' are almost indistinguishable 
indicates that the key is to spread out the
participation times and avoid having them in a cluster.
At last but not least, we observe that, contrary to the previous simulations,
the gap between the plots enlarges towards the end of the horizontal axis 
with larger intensity values.
Noticeably, whatever mechanism that drives this enlargement is very effective as 
it manages to do so even though the bounded nature of social influence acts   in an opposite direction.
This leads to our final insight.

\begin{quote}
\textit{(S3) The larger the intensity factor, the more 
relevant of the timing options in determining the external agent's social influence.}
\end{quote}

\section{Related Work}

\textit{Influence maximisation} is a recurring topic in studies of opinion formation, social networks
and many more.
The term is often referred to as
the algorithmic problem of selecting a predefined number of agents 
in a social network to maximise the spread of a binary opinion in an opinion diffusion process \cite{journal/TKDE/Li2018}.
\cite{conf/KDD/Domingos2001} 
are the first to pose the algorithmic problem \cite{conf/KDD/Domingos2001}.
Later on, \cite{conf/KDD/Kleinberg2003} proposed the so-called independent cascade model and gave a greedy algorithm based on submodular maximisation \cite{conf/KDD/Kleinberg2003,conf/icalp/Kempe05}.
These very influential papers have since then promoted a large amount of follow up works improving 
and extending various aspects of the selection techniques
\cite{book/Chen2013}.
Although the diffusion process and representation of opinions are different from ours,
these works are, in our terminology, endeavours within the realm of the coverage factor.
Rather than focusing on this single factor,
we move on to consider the intensity, duration and timing factors as well.
To the best of our knowledge, this is the first attempt to understand 
how the interplay between the four factors elevate and restrain the overall influencing effort.

Apart from the above works, recent years have seen a surge in the popularity 
of various forms of opinion diffusion in artificial intelligence
\cite{conf/ijcai/Brill16,conf/aamas/Grandi17,conf/tark/ChristoffG17,journal/ijar/Cholvy18,conf/ijcai/Faliszewski18,conf/aamas/Botan2019,journals/tcs/FerraioliV19,conf/aaai/Chistikov2020,journals/ai/AulettaFG20}.
Some of them also tackled influence maximisation problems \cite{conf/AAMAS/Auletta2019,conf/ijcai/Bredereck2020}.
Unlike ours and the aforementioned ones, they study how the sequence of opinion update
affects the influencing effort.
\cite{conf/AAMAS/Auletta2019} considered the case of three alternative opinions in an asynchronous mode of opinion update
and investigated the sequence of updates that maximises the spread of an opinion.
\cite{conf/ijcai/Bredereck2020} continued the work on spread-maximising sequences
and provided upper and lower bounds on the length of such sequences among other results.

\section{Conclusion}

In this paper, we generalised the DeGroot model of opinion formation 
to allow a temporary participant. 
We articulated four factors namely, duration, intensity, coverage and timing 
that dominate the temporary participant's influencing effort
and incorporated them into the opinion formation process.
Through function analysis and simulation, 
we revealed the degree of importance and interplay between the factors 
which lead to crucial insights of influence maximisation.

In summary, the temporary participant ought to adopt the consensus timing option and 
focus its resources on the intensity and coverage factor.
The insights may aid organisations and individuals to make better 
strategic choices when facing a limited resource to maximise their influence. 
Direction for future work is to investigate the case of multiple external agents. 

%
%
%

\bibliographystyle{plain}
\bibliography{opinion}

\begin{thebibliography}{10}

\bibitem{conf/AAMAS/Auletta2019}
Vincenzo Auletta, Diodato Ferraioli, Valeria Fionda, and Gianluigi Greco.
\newblock Maximizing the spread of an opinion when tertium datur est.
\newblock In {\em Proceedings of the 18th International Conference on
  Autonomous Agents and MultiAgent Systems}, page 1207–1215, 2019.

\bibitem{journals/ai/AulettaFG20}
Vincenzo Auletta, Diodato Ferraioli, and Gianluigi Greco.
\newblock On the complexity of reasoning about opinion diffusion under majority
  dynamics.
\newblock {\em Artificial Intelligence}, 284:103--288, 2020.

\bibitem{conf/aamas/Botan2019}
Sirin Botan, Umberto Grandi, and Laurent Perrussel.
\newblock Multi-issue opinion diffusion under constraints.
\newblock In {\em Proceedings of the 18th International Conference on
  Autonomous Agents and MultiAgent Systems}, page 828–836, 2019.

\bibitem{conf/ijcai/Bredereck2020}
Robert Bredereck, Lilian Jacobs, and Leon Kellerhals.
\newblock Maximizing the spread of an opinion in few steps: Opinion diffusion
  in non-binary networks.
\newblock In {\em Proceedings of the Twenty-Ninth International Joint
  Conference on Artificial Intelligence}, pages 1622--1628, 2020.

\bibitem{conf/ijcai/Brill16}
Markus Brill, Edith Elkind, Ulle Endriss, and Umberto Grandi.
\newblock Pairwise diffusion of preference rankings in social networks.
\newblock In {\em Proceedings of the 25th International Joint Conference on
  Artificial Intelligence}, pages 130--136, 2016.

\bibitem{book/Chen2013}
Wei {Chen}, Carlos {Castillo}, and Lakes V.~S. {Lakshmanan}.
\newblock {\em Information and Influence Propagation in Social Networks}.
\newblock Morgan \& Claypool, 2013.

\bibitem{conf/aaai/Chistikov2020}
Dmitry Chistikov, Grzegorz Lisowski, Michael~S. Paterson, and Paolo Turrini.
\newblock Convergence of opinion diffusion is pspace-complete.
\newblock In {\em Proceedings of the 34th AAAI Conference (AAAI-2020)}, 2020.

\bibitem{journal/ijar/Cholvy18}
Laurence Cholvy.
\newblock Opinion diffusion and influence: A logical approach.
\newblock {\em International Journal of Approximate Reasoning}, 93:24 -- 39,
  2018.

\bibitem{conf/tark/ChristoffG17}
Zo{\'{e}} Christoff and Davide Grossi.
\newblock Binary voting with delegable proxy: An analysis of liquid democracy.
\newblock In {\em Proceedings 16th Conference on Theoretical Aspects of
  Rationality and Knowledge}, pages 134--150, 2017.

\bibitem{journal/Deffuant2000}
Guillaume Deffuant, David Neau, Frederic Amblard, and G\'erard Weisbuch.
\newblock Mixing beliefs among interacting agents.
\newblock {\em Advances in Complex Systems}, 03(01n04):87--98, 2000.

\bibitem{DeGroot}
Morris~H. De{G}root.
\newblock Reaching a consensus.
\newblock {\em Journal of the American Statistical Association},
  69(345):118--121, 1974.

\bibitem{conf/KDD/Domingos2001}
Pedro Domingos and Matt Richardson.
\newblock Mining the network value of customers.
\newblock In {\em Proceedings of the Seventh ACM SIGKDD International
  Conference on Knowledge Discovery and Data Mining}, page 57–66, 2001.

\bibitem{conf/ijcai/Faliszewski18}
Piotr Faliszewski, Rica Gonen, Martin Kouteck{\'y}, and Nimrod Talmon.
\newblock Opinion diffusion and campaigning on society graphs.
\newblock In {\em Proceedings of the 27th International Joint Conference on
  Artificial Intelligence}, pages 219--225, 2018.

\bibitem{journals/tcs/FerraioliV19}
Diodato Ferraioli and Carmine Ventre.
\newblock Social pressure in opinion dynamics.
\newblock {\em Theoretical Computer Science}, 795:345--361, 2019.

\bibitem{journal/JMS/Friedkin1990}
Noah~E. Friedkin and Eugene~C. Johnsen.
\newblock Social influence and opinions.
\newblock {\em The Journal of Mathematical Sociology}, 15(3-4):193--206, 1990.

\bibitem{conf/aamas/Grandi17}
Umberto Grandi, Emiliano Lorini, Arianna Novaro, and Laurent Perrussel.
\newblock Strategic disclosure of opinions on a social network.
\newblock In {\em Proceedings of the 16th Conference on Autonomous Agents and
  MultiAgent Systems}, pages 1196--1204, 2017.

\bibitem{journal/JASSS/Hegselmann2002}
Rainer Hegselmann and Ulrich Krause.
\newblock Opinion dynamics and bounded confidence: models, analysis and
  simulation.
\newblock {\em Journal of Artificial Societies and Social Simulation}, 5(3),
  2002.

\bibitem{book/Jackson08}
Matthew~O. Jackson.
\newblock {\em Social and Economic Networks}.
\newblock Princeton University Press, 2008.

\bibitem{conf/KDD/Kleinberg2003}
David Kempe, Jon Kleinberg, and \'{E}va Tardos.
\newblock Maximizing the spread of influence through a social network.
\newblock In {\em Proceedings of the Ninth ACM SIGKDD International Conference
  on Knowledge Discovery and Data Mining}, page 137–146, 2003.

\bibitem{conf/icalp/Kempe05}
David Kempe, Jon Kleinberg, and \'{E}va Tardos.
\newblock Influential nodes in a diffusion model for social networks.
\newblock In {\em Proceedings of the 32nd International Conference on Automata,
  Languages and Programming}, pages 1127--1138, 2005.

\bibitem{journal/TKDE/Li2018}
Yuchen {Li}, Ju~{Fan}, Yanhao {Wang}, and Kian-Lee {Tan}.
\newblock Influence maximization on social graphs: A survey.
\newblock {\em IEEE Transactions on Knowledge and Data Engineering},
  30(10):1852--1872, 2018.

\bibitem{journal/NJP/Masuda_2015}
Naoki Masuda.
\newblock Opinion control in complex networks.
\newblock {\em New Journal of Physics}, 17(3):033--031, mar 2015.

\bibitem{book/Meyer}
Carl~D. Meyer.
\newblock {\em Matrix Analysis and Applied Linear Algebra}.
\newblock Society for Industrial and Applied Mathematics, 2000.

\bibitem{journal/ARC/PROSKURNIKOV2017}
Anton~V. Proskurnikov and Roberto Tempo.
\newblock A tutorial on modeling and analysis of dynamic social networks.
  {P}art {I}.
\newblock {\em Annual Reviews in Control}, 43:65--79, 2017.

\bibitem{journal/NHM/Hegselmann2015}
Ulrich~Krause Rainer~Hegselmann.
\newblock Opinion dynamics under the influence of radical groups, charismatic
  leaders, and other constant signals: A simple unifying model.
\newblock {\em Networks and Heterogeneous Media}, 10(3):477--509, 2015.

\end{thebibliography}

\end{document}